\newcommand{\td}[1]{\mathbb{#1}}
\newtheorem{definition}{Definition}
\newtheorem{theorem}{Theorem}
\newtheorem{lemma}{Lemma}
\newtheorem{corollary}{Corollary}
\newcommand{\bigO}[1]{\mathcal{O}\!\left(#1\right)}
\newcommand{\bigOs}[1]{\mathcal{O}^*\!\left(#1\right)}
\newcommand{\pname}[1]{\textsc{#1}}
\newcommand{\problem}[3]{{\centering\fbox{\pbox{\textwidth}{\pname{#1}\\\textit{Instance}: #2\\\textit{Question}: #3}}}}
\newcommand{\pproblem}[4]{{\centering\fbox{\pbox{\textwidth}{\pname{#1}\\\textit{Instance}: #2\\\textit{Parameter}: #3\\\textit{Question}: #4}}}}
\newcommand{\inners}{1.2pt}
\newcommand{\outers}{1pt}
\newcommand{\gscale}{0.7}
\newclass{\Hard}{hard}
\newclass{\Hness}{hardness}
\newcommand{\NPH}{\NP\text{-}\Hard}
\newclass{\Complete}{complete}
\newclass{\Cness}{completeness}
\newcommand{\NPc}{\NP\text{-}\Complete}
\newfunc{\dist}{dist}
\newfunc{\girth}{girth}
\newfunc{\nd}{nd}
\newfunc{\YES}{YES}
\newfunc{\NOi}{NO}
\newfunc{\ff}{ff}
\newfunc{\cf}{cf}
\newfunc{\tw}{tw}
\BODY\end{proof}}
\newcommand{\ceil}[1]{\left\lceil#1\right\rceil}
\newcommand{\floor}[1]{\left\lfloor#1\right\rfloor}
\title[Parameterized Complexity of Equitable Coloring]{Parameterized Complexity of Equitable Coloring\thanks{Work supported by Brazilian projects CNPq 311013/2015-5, CNPq Universal 421660/2016-3, FAPEMIG, and Coordena\c c\~ao de Aperfei\c coamento de Pessoal de N\'ivel Superior - Brasil (CAPES) - Finance Code 001.}}
\author{Guilherme de C. M. Gomes
        \and Carlos V. G. C. Lima
        \and Vin\'icius F. dos Santos\thanks{Email addresses: gcm.gomes@dcc.ufmg.br (Gomes), carloslima@dcc.ufmg.br (Lima), viniciussantos@dcc.ufmg.br (Santos).}}
\affiliation{Departamento de Ci\^encia da Computa\c c\~ao, Universidade Federal de Minas Gerais, Belo Horizonte, Brazil}
\keywords{Equitable Coloring, Parameterized Complexity, Treewidth, Chordal Graphs}
\begin{document}

\publicationdetails{21}{2019}{1}{8}{4948}

\maketitle

\begin{abstract}
    A graph on $n$ vertices is equitably $k$-colorable if it is $k$-colorable and every color is used either 
    $\left\lfloor n/k \right\rfloor$ or 
    $\left\lceil n/k \right\rceil$ times.
    Such a problem appears to be considerably harder than vertex coloring, being 
    $\mathsf{NP\text{-}complete}$ even for cographs and interval graphs.
    In this work, we prove that it is 
    $\mathsf{W[1]\text{-}hard}$ for block graphs and for disjoint union of split graphs when parameterized by the number of colors; and 
    $\mathsf{W[1]\text{-}hard}$ for $K_{1,4}$-free interval graphs when parameterized by treewidth, number of colors and maximum degree, generalizing a result by Fellows et al. (2014) through a much simpler reduction.
    Using a previous result due to Dominique de Werra (1985), we establish a dichotomy for the complexity of equitable coloring of chordal graphs based on the size of the largest induced star.
    Finally, we show that 
    \textsc{equitable coloring} is 
    $\mathsf{FPT}$ when parameterized by the treewidth of the complement graph.
\end{abstract}

\section{Introduction}

\pname{equitable coloring} is a variant of the classical \pname{vertex coloring} problem, where we not only want to partition an $n$ vertex graph into $k$ independent sets, but also that each of these sets has either $\floor{n/k}$ or $\ceil{n/k}$ vertices.
The smallest integer $k$ for which $G$ admits an equitable $k$-coloring is called the \emph{equitable chromatic number} of $G$.

An extensive survey was conducted by \cite{equitable_survey}, where many of the results on \pname{equitable coloring} of the last 50 years were assembled.
Most of them, however, are upper bounds on the equitable chromatic number.
Such bounds are known for:
bipartite graphs,
trees,
split graphs,
planar graphs,
outerplanar graphs,
low degeneracy graphs,
Kneser graphs,
interval graphs,
random graphs
and some forms of graph products.

Almost all complexity results for \pname{equitable coloring} arise from a related problem, known as \pname{bounded coloring}, an observation given by~\cite{equitable_treewidth}.
On \pname{bounded coloring}, we ask that the size of the independent sets be bounded by an integer $\ell$, which is not necessarily a function on $k$ or $n$.
Among the known results for \pname{bounded coloring}, we have that the problem is solvable in polynomial time for:
split graphs~\citep{equitable_split},
complements of interval graphs~\citep{graph_partitioning1}, complements of bipartite graphs~\citep{graph_partitioning1},
forests~\citep{mutual_exclusion_scheduling}, and
trees~\citep{mutual_exclusion_scheduling, equitable_trees}. 
While the algorithm for \pname{bounded coloring} on trees was first presented by~\cite{mutual_exclusion_scheduling}, \cite{equitable_trees} compute the minimum number of colors required to color a tree using at most $\ell$ colors using a novel characterization.
For cographs, there is a polynomial-time algorithm when the number of colors $k$ is fixed, otherwise the problem is $\NPc$~\citep{graph_partitioning1}; the same is also valid for bipartite graphs and interval graphs~\citep{graph_partitioning1}.
A consequence of the difficulty of \pname{bounded coloring} for cographs is the difficulty of the problem for graphs of bounded cliquewidth.
In complements of comparability graphs, even if we fix $\ell \geq 3$, \pname{bounded coloring} remains $\NPH$~\citep{chain_antichain}, implying that the problem is $\mathsf{para}\NPH$ for the parameter $\ell$.
\cite{colorful_treewidth} show that \pname{equitable coloring} parameterized by treewidth and number of colors is $\W[1]$-$\Hard$ and, in~\citep{equitable_treewidth}, an $\XP$ algorithm parameterized by treewidth is given for both \pname{equitable coloring} and \pname{bounded coloring}.

In this work, we perform a series of reductions proving that \pname{equitable coloring} is $\W[1]$-$\Hard$ for different subclasses of chordal graphs.
In particular, we show that the problem parameterized by the number of colors is $\W[1]$-$\Hard$ for block graphs and for the disjoint union of split graphs.
Moreover, the problem remains $\W[1]$-$\Hard$ for $K_{1,4}$-free interval graphs even if we parameterize it by treewidth, number of colors and maximum degree.
These results generalize the proof given by~\cite{colorful_treewidth} that \pname{equitable coloring} is $\W[1]$-$\Hard$ when parameterized by treewidth and number of colors.
A result given by \cite{claw_free_de_werra} guarantees that every $K_{1,3}$-free graph can be equitable $k$-colored if $k$ is at least the chromatic number.
Since \pname{vertex coloring} can be solved in polynomial time on chordal graphs, we trivially have a polynomial-time algorithm for \pname{equitable coloring} of $K_{1,3}$-free chordal graphs.
This allows us to establish a dichotomy for the computational complexity of \pname{equitable coloring} of chordal graphs based on the size of the largest induced star.
\section{Preliminaries}

We used standard graph theory notation.
Define $[k] = \{1,\dots, k\}$ and $2^S$ the \emph{powerset} of $S$.
A \emph{$k$-coloring} $\varphi$ of a graph $G$ is a function $\varphi: V(G) \mapsto~[k]$.
Alternatively, a $k$-coloring is a $k$-partition $V(G) \sim \{\varphi_1, \dots, \varphi_k\}$ such that $\varphi_i = \{u \in V(G) \mid \varphi(u) = i\}$.
A set $X \subseteq V(G)$ is \emph{monochromatic} if~$\left|\bigcup_{u \in X} \{\varphi(u)\}\right| = 1$.
A $k$-coloring is said to be \emph{equitable} if, for every $i \in [k]$, $\floor{n/k} \leq |\varphi_i| \leq \ceil{n/k}$.
A $k$-coloring of $G$ is \emph{proper} if no edge of $G$ is monochromatic, that is, if $\varphi_i$ is an independent set for every $i \in [k]$.
Unless stated, all colorings are proper.

The \emph{disjoint union}, or simply \emph{union}, of two graphs $G \cup H$ is a graph such that~$V(G \cup H) = V(G) \cup V(H)$ and $E(G \cup H) = E(G) \cup E(H)$.
The \emph{join} of two graphs~$G \oplus H$ is the graph given by~$V(G \oplus H) = V(G) \cup V(H)$ and $E(G \oplus H) = E(G) \cup E(H) \cup \{uv \mid u \in V(G), v \in V(H)\}$.
A \textit{$k$-connected component} is a connected component such that $k$ vertices must be removed for it to become disconnected.
A component is \textit{biconnected} if $k=2$.
A graph is a \emph{block graph} if and only if every biconnected component is a clique;
it is a \emph{split graph} if and only if $V(G)$ can be partitioned in a clique and an independent set.
The \emph{length} of a path $P_n$ on $n$ vertices is the number of edges it contains, that is, $n-1$.
The \emph{diameter} of a graph is the length of the longest minimum path between any two vertices of the graph.

A \textit{tree decomposition} of a graph $G$ is defined as $\td{T} = \left(T, \mathcal{B} = \{B_j \mid j \in V(T)\}\right)$, where $T$ is a tree and $\mathcal{B} \subseteq 2^{V(G)}$ is a family where: $\bigcup_{B_j \in \mathcal{B}} B_j = V(G)$;
for every edge $uv \in E(G)$ there is some~$B_j$ such that $\{u,v\} \subseteq B_j$;
for every $i,j,q \in V(T)$, if $q$ is in the path between $i$ and $j$ in $T$, then $B_i \cap B_j \subseteq B_q$.
Each $B_j \in \mathcal{B}$ is called a \emph{bag} of the tree decomposition.
The \emph{width} of a tree decomposition is defined as the size of a largest bag minus one.
The \emph{treewidth} $\tw(G)$ of a graph $G$ is the smallest width among all valid tree decompositions of $G$~\citep{downey_fellows}.
If $\td{T}$ is a rooted tree, by $G_x$ we will denote the subgraph of $G$ induced by the vertices contained in any bag that belongs to the subtree of $\td{T}$ rooted at bag $x$.
An algorithmically useful property of tree decompositions is the existence of a so called \emph{nice tree decompositions} of width $\tw(G)$.

\begin{definition}{Nice tree decomposition}
    A tree decomposition $\td{T}$ of $G$ is said to be \emph{nice} if it is a tree rooted at, say, the empty bag $r(T)$ and each of its bags is from one of the following four types:
    \begin{enumerate}
        \item \emph{Leaf node}: a leaf $x$ of $\td{T}$ with $B_x = \emptyset$.
        \item \emph{Introduce node}: an inner bag $x$ of $\td{T}$ with one child $y$ such that $B_x \setminus B_y = \{u\}$.
        \item \emph{Forget node}: an inner bag $x$ of $\td{T}$ with one child $y$ such that $B_y \setminus B_x = \{u\}$.
        \item \emph{Join node}: an inner bag $x$ of $\td{T}$ with two children $y,z$ such that $B_x = B_y = B_z$.
    \end{enumerate}
\end{definition}
\section{Subclasses of Chordal Graphs}

\begin{figure}[!tb]
    \centering
    \begin{tikzpicture}[scale=\gscale]
        \GraphInit[unit=3,vstyle=Normal]
        \SetVertexNormal[Shape=circle, FillColor=black, MinSize=2pt]
        \tikzset{VertexStyle/.append style = {inner sep = \inners, outer sep = \outers}}
        \SetVertexNoLabel
        \Vertex[x=0,y=0,L={y}, Lpos=270, LabelOut, Ldist=3pt]{y}
        \draw[] (0,0) circle (0.4cm);
        \Vertex[a=-5, d=1.7cm]{x11}
        \Vertex[a=-30, d=2.3cm]{x12}
        \Vertex[a=-55, d=1.7cm]{x13}
        
        \Vertex[a=115, d=1.7cm]{x21}
        \Vertex[a=90, d=2.3cm]{x22}
        \Vertex[a=65, d=1.7cm]{x23}
        
        \Vertex[a=235, d=1.7cm]{x31}
        \Vertex[a=210, d=2.3cm]{x32}
        \Vertex[a=185, d=1.7cm]{x33}
        
        \foreach \x in {1,2,3}
            \foreach \y in {1,2,3}
                \Edge(y)(x\x\y);
        
        \Edge(x11)(x12)
        \Edge(x11)(x13)
        \Edge(x12)(x13)

        \Edge(x21)(x22)
        \Edge(x21)(x23)
        \Edge(x22)(x23)

        \Edge(x31)(x32)
        \Edge(x31)(x33)
        \Edge(x32)(x33)

    \end{tikzpicture}
    \hfill
    \begin{tikzpicture}[scale=\gscale]
        \GraphInit[unit=3,vstyle=Normal]
        \SetVertexNormal[Shape=circle, FillColor=black, MinSize=2pt]
        \tikzset{VertexStyle/.append style = {inner sep = \inners, outer sep = \outers}}
        \SetVertexNoLabel
        \grComplete[RA=1]{3}
        \Vertex[a=60,d=1.5]{x}
        \Vertex[a=180,d=1.5]{y}
        \Vertex[a=300,d=1.5]{z}
        \draw[] (60:1.5) circle (0.4cm);
        \draw[] (180:1.5) circle (0.4cm);
        \draw[] (300:1.5) circle (0.4cm);
        
        \Edge(x)(a0)
        \Edge(x)(a1)
        \Edge(x)(a2)
        
        \Edge(y)(a0)
        \Edge(y)(a1)
        \Edge(y)(a2)
        
        \Edge(z)(a0)
        \Edge(z)(a1)
        \Edge(z)(a2)
        
    \end{tikzpicture}
    \hfill
    \begin{tikzpicture}[scale=\gscale]
        \GraphInit[unit=3,vstyle=Normal]
        \SetVertexNormal[Shape=circle, FillColor=black, MinSize=2pt]
        \tikzset{VertexStyle/.append style = {inner sep = \inners, outer sep = \outers}}
        \SetVertexNoLabel
        \Vertex[x=0,y=0]{y1}
        
        \draw[] (0,0) circle (0.4cm);
        
        \Vertex[x=3,y=0]{y2}
        
        \draw[] (3,0) circle (0.4cm);
        
        \begin{scope}[shift={(1.5cm,0)}, rotate=-90]
            \grComplete[RA=1, prefix=a]{2}
        \end{scope}
        
        \begin{scope}[shift={(0cm,-1.5cm)}]
            \grComplete[RA=1, prefix=c]{2}
        \end{scope}
        
        \begin{scope}[shift={(-1.5cm,0)}, rotate=-90]
            \grComplete[RA=1, prefix=b]{2}
        \end{scope}
        
        \begin{scope}[shift={(3cm,-1.5cm)}]
            \grComplete[RA=1, prefix=d]{2}
        \end{scope}
        
        \Edges(a0,y1,a1)
        \Edges(b0,y1,b1)
        \Edges(c0,y1,c1)
        
        \Edges(a0,y2,a1)
        \Edges(d0,y2,d1)

    \end{tikzpicture}
    \hfill
    
    \caption{A $(2,4)$-flower, a $(2,4)$-antiflower, and a $(2,2)$-trem.}
    \label{fig:flower}
\end{figure}
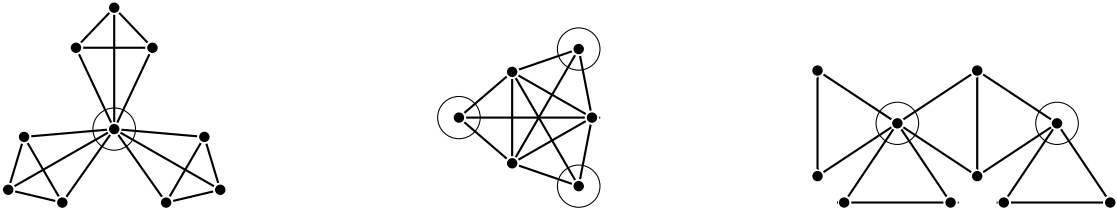

\begin{figure}[!tb]
    \centering
    \begin{tikzpicture}[scale=\gscale]
        \begin{scope}[rotate=36,shift={(0cm, 5cm)}]
            \GraphInit[unit=3,vstyle=Normal]
            \SetVertexNormal[Shape=circle, FillColor=black, MinSize=2pt]
            \tikzset{VertexStyle/.append style = {inner sep = \inners, outer sep = \outers}}
            \SetVertexNoLabel
            \Vertex[Math, x=0,y=0,L={y_1}, Lpos=290, LabelOut, Ldist=3pt]{1y}
            \Vertex[a=-5, d=1.7cm]{1x11}
            \Vertex[a=-30, d=2.3cm]{1x12}
            \Vertex[a=-55, d=1.7cm]{1x13}
            
            \Vertex[a=115, d=1.7cm]{1x21}
            \Vertex[a=90, d=2.3cm]{1x22}
            \Vertex[a=65, d=1.7cm]{1x23}
            
            \Vertex[a=235, d=1.7cm]{1x31}
            \Vertex[a=210, d=2.3cm]{1x32}
            \Vertex[a=185, d=1.7cm]{1x33}
            
            \foreach \x in {1,2,3}
                \foreach \y in {1,2,3}
                    \Edge(1y)(1x\x\y);
            
            \Edge(1x11)(1x12)
            \Edge(1x11)(1x13)
            \Edge(1x12)(1x13)

            \Edge(1x21)(1x22)
            \Edge(1x21)(1x23)
            \Edge(1x22)(1x23)

            \Edge(1x31)(1x32)
            \Edge(1x31)(1x33)
            \Edge(1x32)(1x33)
        \end{scope}
        \begin{scope}[rotate=-36,shift={(0cm, 5cm)}]
            \GraphInit[unit=3,vstyle=Normal]
            \SetVertexNormal[Shape=circle, FillColor=black, MinSize=2pt]
            \tikzset{VertexStyle/.append style = {inner sep = \inners, outer sep = \outers}}
            \SetVertexNoLabel
            \Vertex[Math, x=0,y=0,L={y_1}, Lpos=290, LabelOut, Ldist=3pt]{2y}
            \Vertex[a=-5, d=1.7cm]{2x11}
            \Vertex[a=-30, d=2.3cm]{2x12}
            \Vertex[a=-55, d=1.7cm]{2x13}
            
            \Vertex[a=115, d=1.7cm]{2x21}
            \Vertex[a=90, d=2.3cm]{2x22}
            \Vertex[a=65, d=1.7cm]{2x23}
            
            \Vertex[a=235, d=1.7cm]{2x31}
            \Vertex[a=210, d=2.3cm]{2x32}
            \Vertex[a=185, d=1.7cm]{2x33}
            
            \foreach \x in {1,2,3}
                \foreach \y in {1,2,3}
                    \Edge(2y)(2x\x\y);
            
            \Edge(2x11)(2x12)
            \Edge(2x11)(2x13)
            \Edge(2x12)(2x13)

            \Edge(2x21)(2x22)
            \Edge(2x21)(2x23)
            \Edge(2x22)(2x23)

            \Edge(2x31)(2x32)
            \Edge(2x31)(2x33)
            \Edge(2x32)(2x33)
        \end{scope}
        \begin{scope}[rotate=-108,shift={(0cm, 5cm)}]
            \GraphInit[unit=3,vstyle=Normal]
            \SetVertexNormal[Shape=circle, FillColor=black, MinSize=2pt]
            \tikzset{VertexStyle/.append style = {inner sep = \inners, outer sep = \outers}}
            \SetVertexNoLabel
            \Vertex[Math, x=0,y=0,L={y_1}, Lpos=290, LabelOut, Ldist=3pt]{3y}
            \Vertex[a=-5, d=1.7cm]{3x11}
            \Vertex[a=-30, d=2.3cm]{3x12}
            \Vertex[a=-55, d=1.7cm]{3x13}
            
            \Vertex[a=115, d=1.7cm]{3x21}
            \Vertex[a=90, d=2.3cm]{3x22}
            \Vertex[a=65, d=1.7cm]{3x23}
            
            \Vertex[a=235, d=1.7cm]{3x31}
            \Vertex[a=210, d=2.3cm]{3x32}
            \Vertex[a=185, d=1.7cm]{3x33}
            
            \foreach \x in {1,2,3}
                \foreach \y in {1,2,3}
                    \Edge(3y)(3x\x\y);
            
            \Edge(3x11)(3x12)
            \Edge(3x11)(3x13)
            \Edge(3x12)(3x13)

            \Edge(3x21)(3x22)
            \Edge(3x21)(3x23)
            \Edge(3x22)(3x23)

            \Edge(3x31)(3x32)
            \Edge(3x31)(3x33)
            \Edge(3x32)(3x33)
        \end{scope}
        \begin{scope}[rotate=108,shift={(0cm, 5cm)}]
            \GraphInit[unit=3,vstyle=Normal]
            \SetVertexNormal[Shape=circle, FillColor=black, MinSize=2pt]
            \tikzset{VertexStyle/.append style = {inner sep = \inners, outer sep = \outers}}
            \SetVertexNoLabel
            \Vertex[Math, x=0,y=0,L={y_1}, Lpos=290, LabelOut, Ldist=3pt]{4y}
            \Vertex[a=-5, d=1.7cm]{4x11}
            \Vertex[a=-30, d=2.3cm]{4x12}
            \Vertex[a=-55, d=1.7cm]{4x13}
            
            \Vertex[a=115, d=1.7cm]{4x21}
            \Vertex[a=90, d=2.3cm]{4x22}
            \Vertex[a=65, d=1.7cm]{4x23}
            
            \Vertex[a=235, d=1.7cm]{4x31}
            \Vertex[a=210, d=2.3cm]{4x32}
            \Vertex[a=185, d=1.7cm]{4x33}
            
            \foreach \x in {1,2,3}
                \foreach \y in {1,2,3}
                    \Edge(4y)(4x\x\y);
            
            \Edge(4x11)(4x12)
            \Edge(4x11)(4x13)
            \Edge(4x12)(4x13)

            \Edge(4x21)(4x22)
            \Edge(4x21)(4x23)
            \Edge(4x22)(4x23)

            \Edge(4x31)(4x32)
            \Edge(4x31)(4x33)
            \Edge(4x32)(4x33)
        \end{scope}
        \begin{scope}[rotate=120,shift={(0cm, 0cm)}]
            \GraphInit[unit=3,vstyle=Normal]
            \SetVertexNormal[Shape=circle, FillColor=black, MinSize=2pt]
            \tikzset{VertexStyle/.append style = {inner sep = \inners, outer sep = \outers}}
            \SetVertexNoLabel
            \Vertex[Math, x=0,y=0,L={y_1}, Lpos=290, LabelOut, Ldist=3pt]{0y}
            \Vertex[a=-15, d=1.7cm]{0x11}
            \Vertex[a=-30, d=2.3cm]{0x12}
            \Vertex[a=-45, d=1.7cm]{0x13}
            
            \Vertex[a=-87, d=1.7cm]{0x21}
            \Vertex[a=-102, d=2.3cm]{0x22}
            \Vertex[a=-117, d=1.7cm]{0x23}
            
            \Vertex[a=-159, d=1.7cm]{0x31}
            \Vertex[a=-174, d=2.3cm]{0x32}
            \Vertex[a=-189, d=1.7cm]{0x33}
            
            \Vertex[a=-231, d=1.7cm]{0x41}
            \Vertex[a=-246, d=2.3cm]{0x42}
            \Vertex[a=-261, d=1.7cm]{0x43}
            
            \Vertex[a=-303, d=1.7cm]{0x51}
            \Vertex[a=-318, d=2.3cm]{0x52}
            \Vertex[a=-333, d=1.7cm]{0x53}

            \foreach \x in {1,2,3,4,5}
                \foreach \y in {1,2,3}
                    \Edge(0y)(0x\x\y);
                    
                    \Edge(0x11)(0x11)
            \Edge(0x11)(0x12)
            \Edge(0x11)(0x13)
            \Edge(0x12)(0x12)
            \Edge(0x12)(0x13)
            \Edge(0x13)(0x13)
            \Edge(0x21)(0x21)
            \Edge(0x21)(0x22)
            \Edge(0x21)(0x23)
            \Edge(0x22)(0x22)
            \Edge(0x22)(0x23)
            \Edge(0x23)(0x23)
            \Edge(0x31)(0x31)
            \Edge(0x31)(0x32)
            \Edge(0x31)(0x33)
            \Edge(0x32)(0x32)
            \Edge(0x32)(0x33)
            \Edge(0x33)(0x33)
            \Edge(0x41)(0x41)
            \Edge(0x41)(0x42)
            \Edge(0x41)(0x43)
            \Edge(0x42)(0x42)
            \Edge(0x42)(0x43)
            \Edge(0x43)(0x43)
            \Edge(0x51)(0x51)
            \Edge(0x51)(0x52)
            \Edge(0x51)(0x53)
            \Edge(0x52)(0x52)
            \Edge(0x52)(0x53)
            \Edge(0x53)(0x53)
            
        \end{scope}
        \Edge(0y)(1y)
        \Edge(0y)(2y)
        \Edge(0y)(3y)
        \Edge(0y)(4y)

    \end{tikzpicture}
    
    \caption{\textsc{equitable coloring} instance built on Theorem~\ref{thm:blocks} corresponding to the \textsc{bin-packing} instance $A = \{2,2,2,2\}$, $k=3$ and $B = 4$.}
    \label{fig:super_flower}
\end{figure}
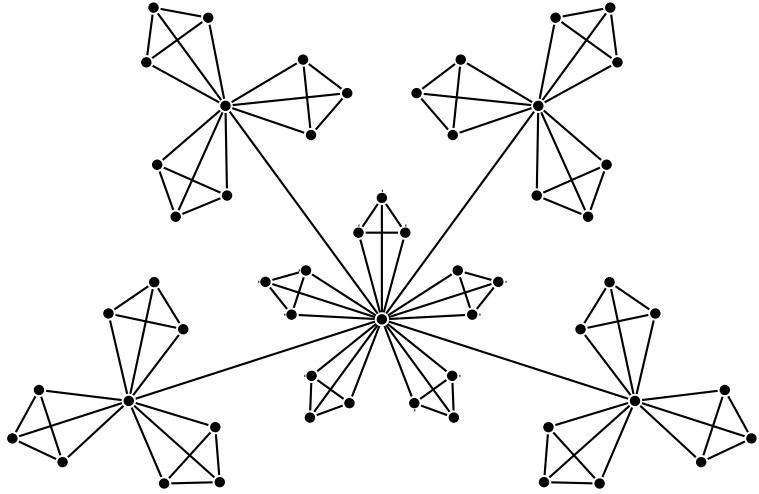

All of our reductions involve the \textsc{bin-packing} problem, which is $\NPH$ in the strong sense~\citep{garey_johnson} and $\W[1]$-$\Hard$ when parameterized by the number of bins~\citep{bin_packing_w1}.
In the general case, the problem is defined as: given a set of positive integers $A = \{a_1, \dots, a_n\}$, called \textit{items}, and two integers $k$ and $B$, can we partition $A$ into $k$ \emph{bins} such that the sum of the elements of each bin is at most $B$?
We shall use a version of \textsc{bin-packing} where each bin sums \emph{exactly} to~$B$.
This second version is equivalent to the first, even from the parameterized point of view; it suffices to add~$kB - \sum_{j \in [n]} a_j$ unitary items to~$A$.
For simplicity, by \textsc{bin-packing} we shall refer to the second version, which we formalize as follows.

\pproblem{bin-packing}{A set of $n$ items $A$ and a bin capacity $B$.}{The number of bins $k$.}{Is there a $k$-partition $\varphi$ of $A$ such that, $\forall i \in [k]$, $\sum_{a_j \in \varphi_i} a_j = B$?}

The idea for the following reductions is to build one gadget for each item~$a_j$ of the given \textsc{bin-packing} instance, perform their disjoint union, and equitably $k$-color the resulting graph.
The color given to the circled vertices in Figure~\ref{fig:flower} control the bin to which the corresponding item belongs to.
Each reduction uses only one of the three gadget types.
Since every gadget is a chordal graph, their treewidth is precisely the size of the largest clique minus one, that is, $k$, which is also the number of desired colors for the built instance of \textsc{equitable coloring}.

\subsection{Disjoint union of Split Graphs}

\begin{definition}
    An $(a,k)$-antiflower is the graph $F_-(a,k) = K_{k-1} \oplus \left(\bigcup_{i \in [a+1]} K_1\right)$, that is, it is the graph obtained after performing the disjoint union of $a+1$ $K_1$'s followed by the join with $K_{k-1}$.
\end{definition}

\begin{theorem}
    \label{thm:dis_split}
    \textsc{equitable coloring} of the disjoint union of split graphs parameterized by the number of colors is $\W[1]$-$\Hard$.
\end{theorem}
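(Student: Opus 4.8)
The plan is to reduce from \textsc{bin-packing} parameterized by the number of bins $k$, which is $\W[1]$-$\Hard$, using the antiflower gadget. Given an instance with items $A = \{a_1, \dots, a_n\}$, capacity $B$, and $k$ bins (so that $\sum_{j} a_j = kB$), I build the graph $G = \bigcup_{j \in [n]} F_-(a_j, k)$, the disjoint union of one antiflower per item, and ask for an equitable $k$-coloring. Each antiflower is a split graph, so $G$ is a disjoint union of split graphs, the construction is clearly polynomial, and the number of colors equals the parameter $k$, so it suffices to establish the equivalence of the two instances.

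The key structural observation is the rigidity of colorings of a single antiflower. In any proper $k$-coloring of $F_-(a_j, k) = K_{k-1} \oplus \bigcup_{i \in [a_j + 1]} K_1$, the clique $K_{k-1}$ occupies $k-1$ distinct colors, leaving exactly one color unused; since every independent vertex is adjacent to the whole clique, all $a_j + 1$ independent vertices are forced to take this single missing color. I will read off this missing color as the bin to which item $j$ is assigned. First I would compute the total number of vertices, $N = \sum_{j}\left[(k-1) + (a_j + 1)\right] = nk + \sum_j a_j = k(n + B)$, so that $N/k = n + B$ is an integer and an equitable $k$-coloring must place exactly $n + B$ vertices in each color class.

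The heart of the argument is then a counting identity. Fix a color $c$ and let $S_c$ be the set of items whose gadget has $c$ as its missing color. A gadget $j \in S_c$ contributes its $a_j + 1$ independent vertices and no clique vertex to color $c$, whereas a gadget $j \notin S_c$ contributes exactly one clique vertex and no independent vertex; hence $|\varphi_c| = \sum_{j \in S_c}(a_j + 1) + (n - |S_c|) = n + \sum_{j \in S_c} a_j$. The extra $+1$ in the definition of the antiflower is precisely what cancels the dependence on $|S_c|$ here. Consequently, color class $c$ has size exactly $n + B$ if and only if $\sum_{j \in S_c} a_j = B$, i.e. if and only if bin $c$ sums exactly to $B$. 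This gives both directions at once: from a valid bin assignment I color each gadget's independent set with its bin color and its clique with the remaining $k-1$ colors arbitrarily, obtaining an equitable proper $k$-coloring; conversely, any equitable $k$-coloring induces, through the missing colors, a partition of the items into $k$ bins each summing to $B$.

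The main obstacle is establishing the rigidity claim and pinning down the exact arithmetic, rather than the construction itself. One must argue that all $k$ colors are genuinely used and that within each gadget the clique cannot avoid using all but one color, so that the independent vertices are truly forced to be monochromatic; this is what converts the soft equitability constraint into the exact per-bin sum $B$. Once the identity $|\varphi_c| = n + \sum_{j \in S_c} a_j$ is in place, the equivalence with \textsc{bin-packing} is immediate, and since the number of colors equals the number of bins, the reduction is parameterized.
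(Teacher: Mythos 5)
Your proposal is correct and follows essentially the same route as the paper: the same antiflower construction, the same rigidity observation (the $K_{k-1}$ forces all independent vertices to the unique missing color), and the same counting identity $|\varphi_c| = n + \sum_{j \in S_c} a_j$ used in both directions. The only cosmetic difference is that you state the identity once and derive both implications from it, whereas the paper writes out the two directions separately.
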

\begin{proof}
    Let $\langle A,k,B\rangle$ be an instance of \textsc{bin-packing} and $G$ a graph such that~$G = \bigcup_{j \in [n]} F_-(a_j,k)$.
    Note that $|V(G)| = \sum_{j \in [n]} |F_-(a_j, k)| = \sum_{j \in [n]} k + a_j = nk + kB$.
    Therefore, in any equitable $k$-coloring of~$G$, each color class has~$n + B$ vertices.
    Define~$F_j = F_-(a_j,k)$ and let~$C_j$ be the corresponding~$K_{k-1}$.
    We show that there is an equitable~$k$-coloring~$\psi$ of~$G$ if and only if~$\varphi = \langle A,k,B\rangle$ is a~$\YES$ instance of \textsc{bin-packing}.
    
    Let~$\varphi$ be a solution to \textsc{bin-packing}.
    For each~$a_j \in A$, we do $\psi(C_j) = [k] \setminus \{i\}$ if $a_j \in \varphi_i$.
    We color each vertex of the independent set of~$F_{j}$ with~$i$ and note that all remaining possible proper colorings of the gadget use each color the same number of times.
    Thus, $|\psi_i| = \sum_{j \mid a_j \in \varphi_i} (a_j + 1) + \sum_{j \mid a_j \notin \varphi_i} 1 = \sum_{j \mid a_j \in \varphi_i} (a_j + 1) + \sum_{j \in [n]} 1 - \sum_{j \mid a_j \in \varphi_i} 1 = n + B$.
    
    Now, let~$\psi$ be an equitable $k$-coloring of~$G$.
    Note that~$|\psi_i| = n+B$ and that the independent set of an antiflower is monochromatic.
    For each $j \in [n]$, $a_j \in \varphi_i$ if $i \notin \psi(C_j)$.
    That is, $n + B = |\psi_i| = \sum_{j \mid i \notin C_j} (a_j + 1) + \sum_{j \mid i \in C_j} 1 = \sum_{j \mid i \notin C_j} (a_j + 1) + \sum_{j \in [n]} 1 - \sum_{j \mid i \notin C_j} 1 = \sum_{j \mid i \notin C_j} a_j + n$, from which we conclude that $\sum_{j \mid i \notin C_j} a_j = B$.
\end{proof}

\subsection{Block Graphs}

We now proceed to the parameterized complexity of block graphs.
Conceptually, the proof follows a similar argumentation as the one developed in Theorem~\ref{thm:dis_split}; in fact, we are able to show that even restricting the problem to graphs of diameter at least four is not enough to develop an $\FPT$ algorithm, unless $\FPT =~\W[1]$.
 
\begin{definition}
    An $(a,k)$-flower is the graph $F(a,k) = K_1 \oplus \left(\bigcup_{i \in [a+1]} K_{k-1}\right)$, that is, it is obtained from the union of $a+1$ cliques of size $k-1$ followed by a join with $K_1$.
\end{definition}

\begin{theorem}
    \label{thm:blocks}
    \textsc{equitable coloring} of block graphs of diameter at least four parameterized by the number of colors and treewidth is $\W[1]$-$\Hard$. 
\end{theorem}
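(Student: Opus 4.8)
The plan is to reduce from \textsc{bin-packing}, reusing the blueprint of Theorem~\ref{thm:dis_split} but replacing the antiflower by the flower gadget so that the whole instance becomes a single connected block graph. Given an instance $\langle A,k,B\rangle$, I would build an \textsc{equitable coloring} instance with $k+1$ colors consisting of one \emph{item flower} $F_j = F(a_j, k+1)$ for each $a_j \in A$, together with a single \emph{central flower} $F_0 = F(B, k+1)$, and I would add an edge from the center $y_0$ of $F_0$ to the center $y_j$ of every item flower. This is precisely the structure depicted in Figure~\ref{fig:super_flower}. Each biconnected component is then either one of the $K_{k+1}$ blocks of a flower or one of the bridges $y_0 y_j$, so the graph is a block graph; since every gadget is chordal, its treewidth equals the size of the largest clique minus one, namely $k$; and taking two petal vertices lying in distinct item flowers (assuming $n \geq 2$) exhibits a shortest path of length four, so the diameter is at least four.

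The engine of the reduction is a structural observation about a flower $F(a,k+1)$: every petal $K_k$ together with the center $y$ forms a clique $K_{k+1}$, so in any proper $(k+1)$-coloring each petal must receive exactly the $k$ colors distinct from $\psi(y)$, each once. Consequently the color $\psi(y)$ of the center occurs exactly once in the flower, whereas each of the remaining $k$ colors occurs exactly $a+1$ times; in particular, the number of times each color is used inside a flower is completely determined by the color of its center.

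Writing $c_j = \psi(y_j)$ and summing the per-flower counts over $j \in \{0,1,\dots,n\}$ I would obtain
\[
    |\psi_i| = \sum_{j=0}^{n}(a_j+1) \;-\; \sum_{j \,:\, c_j = i} a_j,
\]
so color $i$ is \emph{penalised} by $\sum_{j : c_j = i} a_j$, and an equitable coloring is exactly one in which all $k+1$ penalties coincide. The edges $y_0 y_j$ force $c_j \neq c_0$ for every item flower, hence no item center carries the color $c_0$ and the penalty of $c_0$ equals $a_0 = B$. Equitability then forces $\sum_{j \in [n] :\, c_j = i} a_j = B$ for each of the remaining $k$ colors, which is precisely a partition of $A$ into $k$ bins of capacity $B$; conversely, any such packing yields an equitable coloring by placing $y_0$ on the reserved $(k+1)$-th color, coloring $y_j$ with the bin of $a_j$, and filling each petal with the missing colors. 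I would verify both directions by the same bookkeeping as in Theorem~\ref{thm:dis_split}, checking that every class has $kB+n+1$ vertices.

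The step that requires care — and the reason a bare connecting vertex does not suffice — is making the instance connected without disturbing the counting. Because every bin is nonempty, all $k$ bin-colors already appear among the item centers, so a single vertex adjacent to all of them would be uncolorable, and any ad hoc linkage risks adding an asymmetric $+1$ to one color class. The central flower resolves both issues simultaneously: it reserves a fresh, $(k+1)$-th color as a ``sink'' for $y_0$ that no item center may use, and by setting its item value to exactly $B$ it contributes precisely the penalty $B$ to that sink color, keeping all $k+1$ classes in balance. Fixing this value together with the number of colors $k+1$ is the crux of the argument; once it is in place, the remaining claims (block-graph membership, treewidth $k$, diameter four, and the equitability computation) are routine, and since both parameters are bounded by functions of the number of bins $k$, the construction is an $\mathsf{FPT}$-reduction from the $\W[1]$-$\Hard$ problem \textsc{bin-packing}.
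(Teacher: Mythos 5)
Your proposal is correct and matches the paper's own proof essentially step for step: the same construction of item flowers $F(a_j,k+1)$ plus a central flower $F_0 = F(B,k+1)$ joined through the centers (the graph of Figure~\ref{fig:super_flower}), and the same counting argument showing each color class has $kB+n+1$ vertices if and only if the centers' colors encode a valid packing. Your ``penalty'' reformulation of the bookkeeping and the explicit remark on why a bare connecting vertex would fail are just cleaner packaging of the identical argument.
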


\begin{tproof}
    Let $\langle A,k,B\rangle$ be an instance of \textsc{bin-packing}, $\forall k \in [n]$, $F_j = F(a_j, k+1)$, $F_0 = F(B, k+1)$ and, for $j \in \{0\} \cup [n]$, let $y_j$ be the universal vertex of $F_j$.
    Define a graph $G$ such that $V(G) = V\left(\bigcup_{j \in \{0\} \cup [n]} V(F_j)\right)$ and $E(G) = \{y_0y_j \mid j \in [n]\} \cup E\left(\bigcup_{j \in \{0\} \cup [n]} E(F_j)\right)$.
    Looking at Figure~\ref{fig:super_flower}, it is easy to see that any minimum path between a non-universal vertex of $F_a$ and a non-universal vertex of~$F_b$, $a \neq b \neq 0$ has length four.
    We show that $\langle A,k,B\rangle$ is an $\YES$ instance if and only if $G$ is equitably $(k+1)$-colorable.
    \begin{align*}
        |V(G)| &= |V(F_0)| + \sum_{j \in [n]} |V(F_j)| &= &k(B + 1) + 1 + \sum_{j \in [n]} \left(1 + k(a_j + 1)\right)\\
               &= kB + k + n + k^2B + kn + 1           &= &(k+1)(kB + n + 1)\\
    \end{align*}
    
    Given a $k$-partition $\varphi$ of $A$ that solves our instance of \textsc{bin-packing}, we construct a coloring $\psi$ of~$G$ such that $\psi(y_j) = i$ if $a_j \in \varphi_i$ and $\psi(y_0) = k+1$.
    Using a similar argument to the previous theorem, after coloring each $y_j$, the remaining vertices of $G$ are automatically colored.
    For $\psi_{k+1}$, note that $|\psi_{k+1}| = 1 + \sum_{j \in [n]} (a_j + 1) = kB + n + 1 = \frac{|V(G)|}{k+1}$.
    It remains to prove that every other color class $\psi_i$ also has~$\frac{|V(G)|}{k+1}$ vertices.
    
    \begin{align*}
        |\psi_i| &= B + 1 + \sum_{j \mid y_j \notin \psi_i} (a_j + 1) + \sum_{j \mid y_j \in \psi_i} 1 &= &B + 1 + \sum_{j \in [n]} (a_j + 1) - \sum_{j \mid y_j \in \psi_i} a_j\\
                 &= B + 1 + kB + n - B &= &kB + n + 1\\
    \end{align*}
    
    For the converse we take an equitable $(k+1)$-coloring of $G$ and suppose, without loss of generality, that $\psi(y_0) = k+1$ and, consequently, for every other $y_i$, $\psi(y_i) \neq k+1$.
    To build our $k$-partition $\varphi$ of~$A$, we say that $a_j \in \varphi_i$ if $\psi(y_j) = i$.
    The following equalities show that $\sum_{a_j \in \varphi_i} a_j = B$ for every $i$, completing the proof.
    
    \begin{align*}
        |\psi_i|  &= B + 1 + \sum_{j \mid y_j \in \psi_i} 1 + \sum_{j \mid y_j \notin \psi_i} (a_j + 1) &= &B + 1 + \sum_{j \in [n]} (a_j + 1) - \sum_{j \mid y_j \in \psi_i} a_j \\
        kB + n + 1 &= B + 1 + kB + n - \sum_{j \mid y_j \in \psi_i} a_j &\Rightarrow &B =  \sum_{j \mid y_j \in \psi_i} a_j
    \end{align*}
\end{tproof}

\subsection{Interval Graphs without some induced stars}

Before proceeding to our last reduction, we present a polynomial time algorithm to equitably $k$-color a claw-free chordal graph $G$.
To do this, given a partial $k$-coloring $\varphi$ of $G$, denote by $G[\varphi]$ the subgraph of $G$ induced by the vertices colored with $\varphi$, define $\varphi_-$ as the set of colors used $\floor{|V(G[\varphi])|/k}$ times in $\varphi$ and $\varphi_+$ the remaining colors.
If $k$ divides $|V(G[\varphi])|$, we say that $\varphi_+ = \emptyset$.
Our goal is to color~$G$ one maximal clique (say $Q$) at a time and keep the invariant that, the new vertices introduced by $Q$ can be colored a subset of the elements of $L_-$.
To do so, we rely on the fact that, for claw-free graphs, the maximal connected components of the subgraph induced by any two colors form either cycles, which cannot happen since $G$ is chordal, or paths.
By carefully choosing which colors to look at, we find odd length paths that can be greedily recolored to restore our invariant.

\begin{lemma}\label{lem:1}
    There is an $\bigO{n^2}$-time algorithm to equitably $k$-color a claw-free chordal graph or determine that no such coloring exists.
\end{lemma}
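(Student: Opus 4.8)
The plan is to recast the lemma as a constructive, efficient version of the theorem of \citet{claw_free_de_werra}. First I would observe that for a claw-free chordal graph $G$ an equitable $k$-coloring exists if and only if $k \geq \omega(G)$: if $k < \omega(G)$ then $G$ is not even properly $k$-colorable, while if $k \geq \omega(G) = \chi(G)$ (equality holding since chordal graphs are perfect) the cited theorem guarantees that some equitable $k$-coloring exists. Since a perfect elimination ordering, and hence $\omega(G)$, can be computed in linear time, the infeasible instances are detected immediately, so I may assume $k \geq \omega(G)$ and concentrate on producing a coloring. The structural fact driving everything is that, in any proper coloring of $G$, the subgraph induced by two color classes $i,j$ has maximum degree at most two — a vertex with three neighbors in the other class would form a claw — so it is a disjoint union of paths and cycles; a bichromatic cycle would be an induced $C_{\geq 4}$, which chordality forbids, leaving only paths. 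This is exactly the property announced before the statement.

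Next I would process $G$ along its clique tree, introducing the vertices of one maximal clique $Q$ at a time in an order consistent with a perfect elimination scheme, so that the already-colored part of $Q$ is a clique separator of size at most $k-1$ and therefore always leaves a free color. Throughout I maintain the invariant that the current partial coloring $\varphi$ is equitable on the colored vertices and that the newly introduced vertices of $Q$ can be assigned distinct colors taken from $\varphi_-$, the classes of current minimum size; assigning new vertices only to minimum classes preserves equitability after each step. The only thing that can break the invariant is that every color in $\varphi_-$ is already used inside $Q$ by the separator. In that case I pick such a blocking small color $c$ together with a color $c'$ of a maximum class that is \emph{unused} in $Q$ — a counting argument yields one, since the separator uses at most $k-1$ colors yet contains all of $\varphi_-$ — and recolor along the $\{c,c'\}$-bichromatic path containing the blocking vertex. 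Because that component is a path, swapping $c$ and $c'$ along it keeps the coloring proper; because $c'$ did not occur in $Q$, no other vertex of $Q$ receives color $c$, so $c$ becomes available for a new vertex; and by selecting a path of odd length — equivalently, one with equally many $c$- and $c'$-vertices — the swap leaves all class sizes unchanged, so the partial coloring stays equitable.

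The step I expect to be the crux is establishing that such a balanced (odd-length) bichromatic path can always be found, and that a short sequence of such swaps suffices to free enough minimum colors for all the new vertices of a single clique simultaneously, without ever re-blocking a color already fixed for another new vertex of that clique; this is where the \emph{careful choice of colors} does the real work and where the claw-free path structure is used most delicately. Once correctness of the recoloring is in hand, the running time is routine: there are $\bigO{n}$ maximal cliques, each of the $n$ vertices triggers at most one freeing swap, and each swap — locating the relevant bichromatic component and recoloring it — costs $\bigO{n}$, while maintaining the class sizes and the sets $\varphi_-,\varphi_+$ incrementally adds only constant overhead per operation; summing over the $\bigO{n}$ swaps gives the claimed $\bigO{n^2}$ bound overall.
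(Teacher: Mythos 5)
Your overall framework matches the paper's: feasibility iff $k \geq \omega(G)$, processing along a clique tree, the observation that claw-freeness plus chordality makes every two-color subgraph a disjoint union of paths, and a Kempe-style swap when all minimum-size colors are blocked inside the current clique. However, at exactly the step you yourself flag as ``the crux,'' your mechanism has a genuine gap, and as stated it does not work. You propose to swap along \emph{the} $\{c,c'\}$-bichromatic component containing the blocking vertex and to ``select'' that path to be balanced (equally many $c$- and $c'$-vertices). But that component is uniquely determined once $c$ and $c'$ are fixed --- there is nothing to select --- and it need not be balanced: for instance, the blocking vertex may be isolated in the $\{c,c'\}$-subgraph, in which case your swap moves one vertex from a minimum class to a maximum class, leaving $|\varphi_{c'}|$ strictly above the ceiling even after the new vertex is given color $c$. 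No choice of $c'$ is guaranteed to repair this.

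The paper avoids this by never recoloring the blocking vertex at all, and by adding only \emph{one} simplicial vertex $v$ (from a leaf clique $Q$) per step rather than a whole clique. When blocked, it picks $c \in \varphi'_- \cap \varphi'(Q')$ and $d \in \varphi'_+ \setminus \varphi'(Q')$ and swaps a component \emph{both of whose endpoints are colored $d$}. Such a component exists by counting, since class $d$ has one more vertex than class $c$ and only both-ends-$d$ paths carry a surplus of $d$; and it automatically avoids $Q'$: its endpoints are $d$-colored, while the unique $c$-colored vertex $u \in Q'$ cannot be an \emph{interior} vertex of any bichromatic path, because its two $d$-colored path-neighbors together with the still-uncolored simplicial vertex $v$ (whose neighborhood is exactly $Q'$) would form a claw centered at $u$. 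Swapping that component shifts one unit from class $d$ to class $c$, and then $v$ is colored $d$ --- no balance condition is ever needed, and the size accounting closes. This one-vertex-at-a-time induction also dissolves your second unresolved difficulty (freeing several colors for a batch of new vertices ``without re-blocking''), which simply never arises. To repair your write-up you would essentially have to import both of these devices, so the missing step is not a routine verification but the actual content of the proof.
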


\begin{proof}
    We proceed by induction on the number $n$ of vertices of $G$, and show that $G$ is equitably $k$-colorable if and only if its maximum clique has size at most $k$.
    The case $n = 1$ is trivial.
    For general $n$, take one of the leaves of the clique tree of $G$, say $Q$, a simplicial vertex $v \in Q$ and define $G' = G \setminus \{v\}$.
    By the inductive hypothesis, there is an equitable $k$-coloring of $G'$ if and only if $k \geq \omega(G')$.
    If $k < \omega(G')$ or $k < |Q|$, $G$ can't be properly colored.
    
    Now, since $k \geq \omega(G) \geq |Q|$, take an equitable $k$-coloring $\varphi'$ of $G'$ and define $Q' = Q \setminus \{v\}$.
    If $|\varphi'_- \setminus \varphi'(Q')| \geq 1$, we can extend $\varphi'$ to $\varphi$ using one of the colors of $\varphi'_- \setminus \varphi'(Q')$ to greedily color~$v$.
    Otherwise, note that $\varphi'_+ \setminus \varphi'(Q') \neq \emptyset$ because $k \geq \omega(G')$.
    Now, take some color $c \in \varphi'_- \cap \varphi'(Q')$, $d \in \varphi'_+ \setminus \varphi'(Q')$; by our previous observation, we know that $G'[\varphi_c \cup \varphi_d]$ has $C = \{C_1, \dots, C_l\}$ connected components, which in turn are paths.
    Now, take $C_i \in C$ such that $C_i$ has odd length and both endvertices are colored with $d$; said component must exist since $d \in \varphi'_+$ and $c \in \varphi'_-$.
    Moreover, $C_i \cap Q' = \emptyset$, we can swap the colors of each vertex of $C_i$ and then color $v$ with $d$; neither operation makes an edge monochromatic.
    
    As to the complexity of the algorithm, at each step we may need to select $c$ and $d$ -- which takes $\bigO{k}$ time -- construct $C$, find $C_i$ and perform its color swap, all of which take $\bigO{n}$ time.
    Since we need to color $n$ vertices and $k \leq n$, our total complexity is $\bigO{n^2}$.
\end{proof}

The above algorithm was not the first to solve \pname{equitable coloring} for claw-free graphs; this was accomplished by \cite{claw_free_de_werra} which implies that, for any claw-free graph $G$, $\chi_=(G) = \chi_=^*(G) = \chi(G)$.

\begin{theorem}[\cite{claw_free_de_werra}]
    If $G$ is claw-free and $k$-colorable, then $G$ is equitably $k$-colorable.
\end{theorem}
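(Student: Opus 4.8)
The plan is to start from any proper $k$-coloring $\varphi$ of $G$ --- which exists because $G$ is $k$-colorable --- and to repeatedly rebalance its color classes by Kempe-type recolorings until it becomes equitable, without ever introducing a new color. The structural engine is the same observation exploited in the proof of Lemma~\ref{lem:1}: for any two colors $c$ and $d$, every vertex of $G[\varphi_c \cup \varphi_d]$ has at most two neighbors inside this subgraph. Indeed, a vertex $v$ colored $c$ is adjacent only to vertices colored $d$, and these are pairwise non-adjacent since $\varphi_d$ is independent; three such neighbors would form a claw centered at $v$. Hence each two-color subgraph is a disjoint union of paths and cycles, and since $\varphi$ is proper the colors alternate along every component, forcing each cycle to be even. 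Note that here, unlike in Lemma~\ref{lem:1}, cycles may occur, but their balanced parity makes them harmless.

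Next I would show that an imbalanced coloring can always be improved. If $\varphi$ is not equitable, then some pair of classes satisfies $|\varphi_c| \geq |\varphi_d| + 2$, for otherwise all class sizes would already lie in $\{\floor{n/k}, \ceil{n/k}\}$. Tallying the $c$- and $d$-colored vertices component by component in $G[\varphi_c \cup \varphi_d]$, the surplus $|\varphi_c| - |\varphi_d| > 0$ must originate from a component with strictly more $c$'s than $d$'s; among alternating paths and even cycles, the only such component is an even-length path whose two endpoints are both colored $c$. Swapping $c$ and $d$ along this path keeps the coloring proper, since the component meets the rest of $G$ only through colors other than $c$ and $d$, while transferring exactly one vertex from $\varphi_c$ to $\varphi_d$.

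Finally I would argue termination using the potential $\Phi(\varphi) = \sum_{i \in [k]} |\varphi_i|^2$. A swap moving one vertex from a class of size $s_c$ into one of size $s_d$ with $s_c \geq s_d + 2$ changes $\Phi$ by $2(s_d - s_c + 1) \leq -2$, so $\Phi$ strictly decreases at each step; being a non-negative integer it can decrease only finitely often, and once no improving swap remains the coloring is equitable. The main obstacle is the middle step: one must verify both that the required even-length path with monochromatic endpoints genuinely exists whenever two classes are imbalanced, and that recoloring it never produces a monochromatic edge. Both facts follow from the claw-free degree bound, which confines the entire recoloring to a single path or cycle and guarantees the parity bookkeeping above.
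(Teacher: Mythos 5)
Your proof is correct, and it is worth emphasizing that the paper itself never proves this statement: the theorem is imported by citation from de Werra (1985), and the only related argument the paper gives, Lemma~\ref{lem:1}, covers just the claw-free \emph{chordal} case. Your argument is genuinely different from that lemma. Lemma~\ref{lem:1} is an incremental induction along a clique tree: a simplicial vertex $v$ of a leaf clique is removed, $G \setminus \{v\}$ is equitably colored by induction, and the coloring is repaired with one swap along an odd path; chordality supplies the simplicial vertex and rules out cycles in the two-colored subgraphs. You instead rebalance an arbitrary proper $k$-coloring globally: claw-freeness bounds the degree of every two-color subgraph by two, so its components are alternating paths and even cycles; a size gap $|\varphi_c| \geq |\varphi_d| + 2$ forces a path component with both ends colored $c$ (cycles and the remaining paths each contribute $0$ or $\pm 1$ to the gap, which is at least $2$); swapping along that component stays proper because any neighbor of the component colored $c$ or $d$ would itself belong to the component; and the potential $\sum_{i \in [k]} |\varphi_i|^2$ drops by at least $2$ per swap, so the process terminates, necessarily at an equitable coloring. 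All of these steps are sound, including your handling of cycles, which Lemma~\ref{lem:1} never has to face. What your route buys is the full strength of de Werra's theorem --- arbitrary claw-free graphs, no chordality --- which is precisely what the paper's dichotomy for chordal graphs rests on. What it costs is algorithmic: it presupposes a proper $k$-coloring as a starting point (easy to obtain for chordal graphs, \NP-\Hard{} for general claw-free graphs) and may perform on the order of $n^2$ swaps, so as an algorithm it is slower than the $\bigO{n^2}$ procedure of Lemma~\ref{lem:1}; but as a self-contained existence proof of the cited theorem it is complete.
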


However, \citep{claw_free_de_werra} is not easily accessible, as it is not in any online repository.
Moreover, the given algorithm has no clear time complexity and, as far as we were able to understand the proof, its running time would be $\bigO{k^2n}$, which, for $k = f(n)$, is worse than the algorithm we present in Lemma~\ref{lem:1}.

\begin{definition}
    Let $\mathcal{Q} = \{Q_1, Q_1', \dots, Q_a, Q_a'\}$ be a family of cliques such that $Q_i \simeq Q_i' \simeq K_{k-1}$ and $Y = \{y_1, \dots, y_a\}$ be a set of vertices.
    An $(a,k)$-trem is the graph $H(a,k)$ where $V\left(H(a,k)\right) = \mathcal{Q} \cup Y$ and $E\left(H(a,k)\right) = E\left(\bigcup_{i \in [a]} (Q_i \cup Q_i') \oplus y_i\right) \cup E\left(\bigcup_{i \in [a-1]} y_i \oplus Q_{i+1}\right)$.
\end{definition}

\begin{theorem}
    \label{thm:chordal}
    Let $G$ be a $K_{1,r}$-free interval graph. If $r \geq 4$, \textsc{equitable coloring} of $G$ parameterized by treewidth, number of colors and maximum degree is $\W[1]$-$\Hard$. Otherwise, the problem is solvable in polynomial time.
\end{theorem}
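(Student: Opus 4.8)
The plan is to dispatch the two regimes of the dichotomy separately, reusing the machinery already in place. For $r \le 3$ I would first observe that $K_{1,r}$-freeness implies claw-freeness, since an induced $K_{1,3}$ contains an induced $K_{1,r}$ (its centre together with any $r$ of its three leaves), contradicting the hypothesis. As interval graphs are chordal, $G$ is then a claw-free chordal graph, and Lemma~\ref{lem:1} solves \textsc{equitable coloring} on it in $\bigO{n^2}$ time, settling the polynomial side at once.

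For $r \ge 4$ it suffices to prove $\W[1]$-$\Hard$ness for $K_{1,4}$-free interval graphs: a $K_{1,4}$-free graph is $K_{1,r}$-free for every $r \ge 4$ (an induced $K_{1,r}$ contains an induced $K_{1,4}$), so hardness on the smaller class transfers to the larger one. I would reduce from \textsc{bin-packing} parameterized by the number $k$ of bins, taking $G = \bigcup_{j \in [n]} H(a_j,k)$, the disjoint union of one $(a_j,k)$-trem per item. Here each clique $Q_i \cup \{y_i\}$ is a largest clique $K_k$, so every trem, being chordal, has treewidth $k-1$; the vertices $y_i$ realise the maximum degree $3(k-1)$; and $k$ colours are required. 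Each trem is an interval graph, witnessed by the clique path $Q_1y_1, Q_1'y_1, Q_2y_1, Q_2y_2, Q_2'y_2, Q_3y_2,\dots$ in which every vertex occupies a contiguous run of cliques, and it is $K_{1,4}$-free because the only vertices whose neighbourhood meets three cliques are the $y_i$ (with neighbourhood $Q_i \cup Q_i' \cup Q_{i+1}$), so the largest induced star is a $K_{1,3}$; both properties survive disjoint union. Thus $G$ is a $K_{1,4}$-free interval graph with all three parameters bounded by functions of $k$, built in time polynomial in the input.

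The crux is the following structural observation, which I would prove first: in any proper $k$-colouring $\psi$ of a single trem, all $y_i$ receive a common colour. Indeed $Q_{i+1} \simeq K_{k-1}$ is completely joined to both $y_i$ and $y_{i+1}$, so its $k-1$ vertices must take $k-1$ distinct colours disjoint from $\{\psi(y_i),\psi(y_{i+1})\}$, which is possible only when $\psi(y_i)=\psi(y_{i+1})$; applying this to every $i \in [a-1]$ forces a common colour $c$ on all of $Y$. Consequently each of the $2a_j$ cliques $Q_i, Q_i'$ uses every colour of $[k]\setminus\{c\}$ exactly once, so in the trem for item $a_j$ colour $c$ appears $a_j$ times and every other colour $2a_j$ times, the same bookkeeping as in Theorem~\ref{thm:dis_split}, with the colour of $Y$ playing the role of the chosen bin.

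It then remains to run the count. Writing $S = \sum_j a_j = kB$ and letting $W_c$ be the total value of the items whose trems colour $Y$ with $c$, a fixed colour satisfies $|\psi_c| = 2S - W_c$, while $|V(G)| = S(2k-1)$ forces $|\psi_c| = B(2k-1)$ in any equitable colouring; hence $\psi$ is equitable if and only if $W_c = B$ for every colour, i.e. if and only if reading colours as bins yields a $\YES$-instance of \textsc{bin-packing}. For the converse I would colour trem $j$ with $\psi(Y)=c$ exactly when $a_j$ lies in bin $c$, and the identity above gives each class precisely $B(2k-1)$ vertices. I expect the main obstacle to be establishing the monochromaticity of $Y$ cleanly, together with the simultaneous verification that trems are interval and $K_{1,4}$-free; once these are secured, the vertex count is routine.
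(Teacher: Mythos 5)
Your proposal is correct and follows essentially the same route as the paper: the identical reduction from \textsc{bin-packing} via disjoint unions of $(a_j,k)$-trems, the same monochromaticity of the cut-vertex sets $Y_j$, and the same counting argument, with the $r\le 3$ case dispatched by claw-freeness and Lemma~\ref{lem:1}. In fact you supply details the paper leaves implicit (the explicit proof that $Y$ is forced monochromatic, the clique-path witnessing intervalness, and the verification of $K_{1,4}$-freeness), so no gaps remain.
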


\begin{tproof}
    Once again, let $\langle A,k,B\rangle$ be an instance of \textsc{bin-packing}, define $\forall j \in [n]$, $H_j = H(a_j, k)$ and let $Y_j$ be the set of cut-vertices of $H_j$.
    The graph $G$ is defined as $G = \bigcup_{j \in [n]} V(H_j)$.
    By the definition of an $(a,k)$-trem, we note that the vertices with largest degree are the ones contained in $Y_j \setminus \{y_a\}$, which have degree equal to $3(k-1)$.
    We show that $\langle A,k,B\rangle$ is an $\YES$ instance if and only if $G$ is equitably $k$-colorable, but first note that $|V(G)| = \sum_{j \in [n]} |V(H_j)| = \sum_{j \in [n]} a_j + 2a_j(k-1) = kB + 2(k-1)kB = k(2kB - B)$.
    
    Given a $k$-partition $\varphi$ of $A$ that solves our instance of \textsc{bin-packing}, we construct a coloring $\psi$ of $G$ such that, for each $y \in Y_j$, $\psi(y) = i$ if and only if $a_j \in \varphi_i$.
    Using a similar argument to the other theorems, after coloring each $Y_j$, the remaining vertices of $G$ are automatically colored, and we have  $|\psi_i| = \sum_{j \in \varphi_i} a_j + \sum_{j \notin \varphi_i} 2a_j = B + 2(k-1)B = 2kB - B$.
    
    For the converse we take an equitable $k$-coloring of $G$ and observe that, for every $j \in [n]$, $|\psi(Y_j)| = 1$.
    As such, to build our $k$-partition $\varphi$ of $A$, we say that $a_j \in \varphi_i$ if and only if $\psi(Y_j) = \{i\}$.
    Thus, since $|\psi| = 2kB - B$, we have that $2kB - B = \sum_{j \in \varphi_i} a_j + \sum_{j \notin \varphi_i} 2a_j = \sum_{j \in [n]} a_j - \sum_{j \in \varphi_i} a_j = 2kB - \sum_{j \in \varphi_i} a_j$, from which we conclude that $B = \sum_{j \in \varphi_i} a_j$.
\end{tproof}
\section{Clique Partitioning}

Since \textsc{equitable coloring} is $\W[1]$-$\Hard$ when simultaneously parameterized by many parameters, we are led to investigate a related problem.
Much like \textsc{equitable coloring} is the problem of partitioning~$G$ in $k'$ independent sets of size $\ceil{n/k}$ and $k - k'$ independent sets of size $\floor{n/k}$, one can also attempt to partition $\overline{G}$ in cliques of size $\ceil{n/k}$ or $\floor{n/k}$.
A more general version of this problem is formalized as follows:

\problem{clique partitioning}{A graph $G$ and two positive integers $k$ and $r$.}{Can $G$ be partitioned in $k$ cliques of size $r$ and $\frac{n-rk}{r-1}$ cliques of size $ r - 1$?}

We note that both \textsc{maximum matching} (when $k \geq n/2$) and \textsc{triangle packing} (when $k < n/2$) are particular instances of \textsc{clique partitioning}, the latter being $\FPT$ when parameterized by~$k$~\citep{triangle_packing}.
As such, we will only be concerned when $r \geq 3$.
To the best of our efforts, we were unable to provide an $\FPT$ algorithm for \textsc{clique partitioning} when parameterized by $k$ and $r$, even if we fix $r = 3$.
However, the situation is different when parameterized by the treewidth of $G$, and we obtain an algorithm running in $2^{\tw(G)}n^{\bigO{1}}$ time for the corresponding counting problem, \textsc{\#clique partitioning}.

The key ideas for our bottom-up dynamic programming algorithm are quite straightforward. First, cliques are formed only when building the tables for forget nodes. Second, for join nodes, we can safely consider only the combination of two partial solutions that have empty intersection on the covered vertices (i.e. that have already been assigned to some clique). Finally, both join and forget nodes can be computed using fast subset convolution~\citep{fourier_mobius}.
For each node $x \in \td{T}$, our algorithm builds the table $f_x(S, k')$, where each entry is indexed by a subset $S \subseteq B_x$ that indicates which vertices of $B_x$ have already been covered, an integer $k'$ recording how many cliques of size $r$ have been used, and stores how many partitions exist in $G_x$ such that only $B_x \setminus S$ is yet uncovered.
If an entry is inconsistent (e.g. $S \nsubseteq B_x$), we say that $f(S, K') = 0$.

\begin{theorem}
    \label{thm:clique_part}
    There is an algorithm that, given a nice tree decomposition of an $n$-vertex graph $G$ of width $\tw$, computes the number of partitions of $G$ in $k$ cliques of size $r$ and $\frac{n - rk}{r-1}$ cliques of size $r-1$ in time $\bigOs{2^{\tw}}$ time.
\end{theorem}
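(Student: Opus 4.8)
The plan is to flesh out the bottom-up dynamic programming scheme sketched before the statement, verifying the recurrence at each of the four node types and then bounding the running time. The linchpin of correctness is the standard structural fact that, in any tree decomposition, every clique of $G$ is entirely contained in some bag; this is what makes it sound to create each clique of a partition at a single forget node, namely the one at which the \emph{first} of its vertices is forgotten (at that moment all of its other vertices are still present in the child bag and uncovered). The table is as described: $S \subseteq B_x$ records the already-covered vertices of $B_x$, $k'$ counts how many size-$r$ cliques have been used, and $f_x(S,k')$ stores the number of ways to partition the vertices of $G_x$ other than $B_x \setminus S$ into $k'$ cliques of size $r$ and some number of cliques of size $r-1$. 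Since the root bag is empty, reading $f_{r(T)}(\emptyset,k)$ returns exactly the quantity in the statement, so it suffices to fill the tables correctly. I would first record the easy nodes: a leaf has $f_x(\emptyset,0)=1$ and is $0$ otherwise, and an introduce node for $u$ (so $B_x=B_y\cup\{u\}$) simply satisfies $f_x(S,k')=f_y(S,k')$ when $u\notin S$ and $0$ when $u\in S$, since a freshly introduced vertex cannot yet be covered.

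The substance is in the forget and join nodes. For a forget node removing $u$, with child $y$ and $B_x=B_y\setminus\{u\}$, there are two cases for each $S\subseteq B_x$: either $u$ was already covered in the child, contributing $f_y(S\cup\{u\},k')$, or the clique containing $u$ is created now. In the second case I choose a set $T\subseteq S$ with $T\cup\{u\}$ a clique of size $r-1$ or $r$, mark $T$ newly covered, and read the child at $S\setminus T$. This yields
\[ f_x(S,k') = f_y(S\cup\{u\},k') + \sum_{\substack{T\subseteq S \\ T\cup\{u\}\ \mathrm{clique}}} f_y\bigl(S\setminus T,\; k'-[\,|T|=r-1\,]\bigr). \]
Writing $g_u(T)=[\,T\cup\{u\}\text{ is a clique of size }r-1\text{ or }r\,]$ together with the recorded size, the summation is exactly the subset convolution $(g_u * f_y)(S)$, because $T$ and $S\setminus T$ are disjoint with union $S$; the dependence on $k'$ is an ordinary (polynomial) convolution carried along in an auxiliary coordinate. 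Thus the forget table is computed by fast subset convolution of the child table with a precomputable clique-indicator function.

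For a join node with children $y,z$ and $B_x=B_y=B_z$, a covered bag vertex has its clique formed in exactly one of the two subtrees, so the covered sets on the two sides must be disjoint, giving
\[ f_x(S,k') = \sum_{\substack{S_y\uplus S_z = S \\ k'_y+k'_z=k'}} f_y(S_y,k'_y)\, f_z(S_z,k'_z), \]
again a subset convolution over $S$ and an ordinary convolution over $k'$. Here I would argue that the disjointness constraint is precisely what prevents double counting: vertices uncovered in both sides remain uncovered (their cliques, possibly contained in $B_x$, are formed later, above the join), while any covered vertex contributes on a single side. Combined with the uniqueness of the formation point established from the clique-in-a-bag property, this shows that the family of tables computes exactly the intended counts, and in particular $f_{r(T)}(\emptyset,k)$ is the desired number of partitions.

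For the complexity, each table has $2^{|B_x|}\le 2^{\tw+1}$ subset entries times a polynomial range of $k'\le n$. By the result of \citet{fourier_mobius}, each subset convolution over a ground set of size $|B_x|$ costs $2^{|B_x|}\,|B_x|^{\bigO{1}}$ ring operations, and the extra $k'$-convolution and the polynomial-size integers add only factors of $n^{\bigO{1}}$. Summing over the $\bigO{\tw\cdot n}$ nodes of a nice tree decomposition gives the claimed $\bigOs{2^{\tw}}$ bound. The main obstacle I anticipate is not the arithmetic but the bookkeeping argument that each clique of an arbitrary valid partition is created exactly once under these recurrences and that no spurious partitions are produced: this requires the careful case analysis pairing the mode-(a)/mode-(b) split at forget nodes with the disjointness condition at joins, and checking that the subset-convolution reindexing $T'=S\setminus T$ faithfully reflects which bag vertices are covered on each side.
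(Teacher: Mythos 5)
Your proposal is correct and takes essentially the same route as the paper's proof: the identical table $f_x(S,k')$, the same recurrences at leaf, introduce, forget, and join nodes (cliques formed only at forget nodes, disjoint covered sets combined at join nodes), and the same use of fast subset convolution to reach $\bigOs{2^{\tw}}$. Your explicit argument that each clique is counted exactly once --- every clique lies in some bag and is formed at the forget node of its first-forgotten vertex --- is a sound elaboration of a correctness point the paper leaves implicit.
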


\begin{proof}
    \textit{Leaf node:} Take a leaf node $x \in \td{T}$ with $B_x = \emptyset$. 
    Since the only one way of covering an empty graph is with zero cliques, we compute $f_x$ with:
    \begin{equation*}
        \centering
        \hfill f_x(S, k') =
        \begin{cases}
            1, \text{ if } k' = 0 \text{ and } S = \emptyset \text{;}\\
            0, \text{ otherwise.}\\
        \end{cases}
    \end{equation*}
    
    \emph{Introduce node:} Let $x$ be a an introduce node, $y$ its child and $v \in B_x \setminus B_y$.
    Due to our strategy, introduce nodes are trivial to solve; it suffices to define $f_x(S, k') = f_y(S, k')$. If $v \in S$, we simply define~$f_x(S, k') = 0$.
    
    \emph{Forget node:} For a forget node $x$ with child $y$ and forgotten vertex $v$, we formulate the computation of~$f_x(S, k')$ as the subset convolution of two functions as follows:
    
    \begin{align*}
        \centering
        f_x(S, k') &= f_y(S \cup \{v\}, k') + \sum_{A \subseteq S} f_y(S \setminus A, k' - 1)g_r(A, v) + \sum_{A \subseteq S} f_y(S \setminus A, k')g_{r-1}(A, v)\\
        g_l(A, v) &=
            \begin{cases}
                1, \text{ if $A$ is a clique of size $l$ contained in $N[v]$ and $v \in A$;}\\
                0, \text{otherwise.}
            \end{cases}
    \end{align*}
    
    The above computes, for every $S \subseteq B_x$ and every clique $A$ (that contains $v$) of size $r$ or $r - 1$ contained in $N[v] \cap B_y \cap S$, if $S \setminus A$ and some $k''$ is a valid entry of $f_y$, or if $v$ had been previously covered by another clique (first term of the sum).
    Directly computing the last two terms of the equation, for each pair~$(S, k')$, yields a total running time of the order of $ \sum_{|S| = 0}^\tw \binom{\tw}{|S|}2^{|S|} = (1+2)^\tw = 3^\tw$ for each forget node.
    However, using the fast subset convolution technique described by~\cite{fourier_mobius}, we can compute the above equation in time $\bigOs{2^{|B_x|}} = \bigOs{2^{\tw}}$.
    
    Correctness follows directly from the hypothesis that $f_y$ is correctly computed and that, for every $A \subseteq B_x$, $g_r(A, v)g_{r-1}(A, v) = 0$.
    For the running time, we can pre-compute both $g_r$ and $g_{r-1}$ in~$\bigO{2^\tw r^2}$, so their values can be queried in $\bigO{1}$ time.
    As such, each forget node takes $\bigO{2^\tw\tw^3k}$ time, since we can compute the subset convolutions of $f_y * g_r$ and $f_y * g_{r-1}$ in $\bigO{2^\tw\tw^3}$ time each.
    The additional factor of $k$ comes from the second coordinate of the table index.
    
    \emph{Join node:} Take a join node $x$ with children $y$ and $z$.
    Since we want to partition our vertices, the cliques we use in $G_y$ and $G_z$ must be completely disjoint and, consequently, the vertices of $B_x$ covered in $B_y$ and~$B_z$ must also be disjoint.
    As such, we can compute $f_x$ through the equation:
    
    \begin{equation*}
        f_x(S, k') = \sum_{k_y + k_z = k'} \sum_{A \subseteq S} f_y(A, k_y)f_z(S \setminus A, k_z)
    \end{equation*}
    
    Note that we must sum over the integer solutions of the equation $k_y + k_z = k'$ since we do not know how the cliques of size $r$ are distributed in $G_x$.
    To do that, we compute the subset convolution $f_y(\cdot, k_y) * f_z(\cdot, k_z)$.
    The time complexity of $\bigO{2^\tw\tw^3k^2}$ follows directly from the complexity of the fast subset convolution algorithm, the range of the outermost sum and the range of the second parameter of the table index.
    
    For the root $x$, we have $f_x(\emptyset, k) \neq 0$ if and only if $G_x = G$ can be partitioned in $k$ cliques of size $r$ and the remaining vertices in cliques of size $r-1$.
    Since our tree decomposition has $\bigO{n\tw}$ nodes, our algorithm runs in time $\bigO{2^\tw\tw^4k^2n}$.
    
    To recover a solution given the tables $f_x$, start at the root node with $S = \emptyset$, $k' = k$ and let $\mathcal{Q} = \emptyset$ be the cliques in the solution.
    We shall recursively extend $\mathcal{Q}$ in a top-down manner, keeping track of the current node $x$, the set of vertices $S$ and the number $k'$ of $K_r$'s used to cover $G_x$.
    Our goal is to keep the invariant that $f_x(S, k') \neq 0$.
    
    \emph{Introduce node:} Due to the hypothesis that $f_x(S, k') \neq 0$ and the way that $f_x$ is computed, it follows that $f_y(S, k') \neq 0$.
        
    \emph{Forget node:} Since the current entry is non-zero, there must be some $A \subseteq S$ such that exactly one of the products $f_y(S \setminus A, k' - 1)g_r(A, v)$, $f_y(S \setminus A, k')g_{r-1}(A, v)$ is non-zero and, in fact, any such $A$ suffices.
    To find this subset, we can iterate through $2^S$ in $\bigO{2^\tw}$ time and test both products to see if any of them is non-zero.
    Note that the chosen $A \cup \{v\}$ will be a clique of size either $r$ or $r-1$, and thus, we can set $\mathcal{Q} \gets \mathcal{Q} \cup \{A \cup \{v\}\}$.
        
    \emph{Join node:} The reasoning for join nodes is similar to forget nodes, however, we only need to determine which states to look at in the child nodes.
    That is, for each integer solution to $k_y + k_z = k'$ and for each $A \subseteq S$, we check if both $f_y(A, k_y)f_z(S \setminus A, k_z)$ is non-zero; in the affirmative, we compute the solution for both children with the respective entries.
    Any such triple $(A, k_y, k_z)$ that satisfies the condition suffices.
    
    Clearly, retrieving the solution takes $\bigO{2^\tw k}$ time per node, yielding a running time of $\bigOs{2^\tw}$.
\end{proof}

\begin{corollary}
    Equitable coloring is $\FPT$ when parameterized by the treewidth of the complement graph.
\end{corollary}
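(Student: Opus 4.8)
The plan is to obtain the corollary as a direct consequence of Theorem~\ref{thm:clique_part} applied to the complement graph, exploiting the duality already observed at the start of this section: a set $S \subseteq V(G)$ is an independent set of $G$ exactly when $S$ is a clique of $\overline{G}$. Consequently, an equitable $k$-coloring of $G$---a partition of $V(G)$ into $k$ independent sets, each of size $\floor{n/k}$ or $\ceil{n/k}$---is precisely a partition of $V(\overline{G})$ into $k$ cliques of sizes $\floor{n/k}$ and $\ceil{n/k}$. The whole argument is therefore a translation of equitable coloring on $G$ into a clique partition on $\overline{G}$, whose treewidth is our parameter.

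First I would fix the two admissible clique sizes and their multiplicities. Assume $k \nmid n$ and set $r = \ceil{n/k} = \floor{n/k} + 1$ together with $k' = n - k\floor{n/k}$. A counting argument forces any such partition to use exactly $k'$ cliques of size $r$ and $k - k'$ cliques of size $r - 1$, since $k' r + (k - k')(r-1) = k\floor{n/k} + k' = n$. The case $k \mid n$ is handled the same way by taking instead $r = n/k + 1$ and $k' = 0$, so that all $k$ classes share the common size $r - 1 = n/k$; in particular the trivial case $k = n$ falls out correctly, as then every vertex forms its own class.

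Next I would invoke the algorithm of Theorem~\ref{thm:clique_part} on the instance consisting of $\overline{G}$, clique size $r$, and target count $k'$ of size-$r$ cliques (note that this $k'$, not the number of colors $k$, plays the role of the parameter called $k$ in \textsc{clique partitioning}). The one identity to verify is that the number of size-$(r-1)$ cliques demanded there, namely $\frac{n - r k'}{r-1}$, equals $k - k'$; this is immediate from $n - r k' = (k - k')(r-1)$. The algorithm returns the number of such partitions of $\overline{G}$, and $G$ is equitably $k$-colorable if and only if this count is nonzero.

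For the running time, I would first compute an optimal nice tree decomposition of $\overline{G}$ of width $\tw(\overline{G})$, which can be done in $\FPT$ time in the parameter, and then run Theorem~\ref{thm:clique_part}, which costs $\bigOs{2^{\tw(\overline{G})}}$. Since the factors hidden in $\bigOs{\cdot}$---including the $k \le n$ appearing in the clique-partitioning bound---are polynomial in $n$, the overall procedure is $\FPT$ when parameterized by $\tw(\overline{G})$. I expect the only delicate point to be the bookkeeping of clique sizes and multiplicities across the divisibility cases; beyond that, the result is an essentially immediate corollary of the clique-partitioning algorithm via complementation.
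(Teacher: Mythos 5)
Your proposal is correct and matches the paper's intended argument: the corollary is stated as an immediate consequence of Theorem~\ref{thm:clique_part} via exactly the complementation you describe (independent sets of $G$ are cliques of $\overline{G}$, so an equitable $k$-coloring is a partition of $\overline{G}$ into cliques of sizes $\floor{n/k}$ and $\ceil{n/k}$). Your explicit bookkeeping of $r$, $k'$, and the divisibility cases, plus the observation that the $k^2$ factor in the running time is polynomial in $n$, simply fills in details the paper leaves implicit.
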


\section{Conclusions}

In this work, we investigated the \pname{equitable coloring} problem.
We developed novel parameterized reductions from \pname{bin-packing}, which is $\W[1]$-$\Hard$ when parameterized by number of bins.
These reductions showed that \pname{equitable coloring} is $\W[1]-\Hard$ in three more cases: (i) if we restrict the problem to block graphs and parameterize by the number of colors, treewidth and diameter; (ii) on the disjoint union of split graphs, a case where the connected case is polynomial; (iii) \pname{equitable coloring} of $K_{1,r}$ interval graphs, for any $r \geq 4$, remains hard even if we parameterize by the number of colors, treewidth and maximum degree.
This, along with a previous result by~\cite{claw_free_de_werra}, establishes a dichotomy based on the size of the largest induced star: for $K_{1,r}$-free graphs, the problem is solvable in polynomial time if $r \leq 2$, otherwise it is $\W[1]-\Hard$.
These results significantly improve the ones by~\cite{colorful_treewidth} through much simpler proofs and in very restricted graph classes.

Since the problem remains hard even for many natural parameterizations, we resorted to a more exotic one -- the treewidth of the complement graph.
By applying standard dynamic programming techniques on tree decompositions and the fast subset convolution machinery of~\cite{fourier_mobius}, we obtain an $\FPT$ algorithm when parameterized by the treewidth of the complement graph.

Natural future research directions include the identification and study of other uncommon parameters that may aid in the design of other $\FPT$ algorithms.
Revisiting \pname{clique partitioning} when parameterized by $k$ and $r$ is also of interest, since its a related problem to \pname{equitable coloring} and the complexity of its natural parameterization is yet unknown.

\acknowledgements
We are grateful to the two anonymous reviewers whose comments/suggestions helped us to improve and clarify the paper.

\bibliographystyle{abbrvnat}

\end{document}